\begin{document}

\title{TIMERS: Error-Bounded SVD Restart on Dynamic Networks}

\author{Ziwei Zhang$^{1}$, Peng Cui$^{1}$, Jian Pei$^{2}$, Xiao Wang$^{1}$, Wenwu Zhu$^{1}$\\
$^{1}$ Department of Computer Science and Technology, Tsinghua University, China\\
$^{2}$ School of Computing Science, Simon Fraser University, Canada\\
\normalsize{zw-zhang16@mails.tsinghua.edu.cn, cuip@tsinghua.edu.cn, jpei@cs.sfu.ca}\\
\normalsize{wangxiao007@mail.tsinghua.edu.cn,wwzhu@tsinghua.edu.cn}\\
}

\maketitle

\newtheorem{theorem}{Theorem}
\newtheorem{lemma}{Lemma}

\begin{abstract}
    Singular Value Decomposition (SVD) is a popular approach in various network applications, such as link prediction and network parameter characterization. Incremental SVD approaches are proposed to process newly changed nodes and edges in dynamic networks. However, incremental SVD approaches suffer from serious error accumulation inevitably due to approximation on incremental updates. SVD restart is an effective approach to reset the aggregated error, but when to restart SVD for dynamic networks is not addressed in literature. In this paper, we propose TIMERS, Theoretically Instructed Maximum-Error-bounded Restart of SVD, a novel approach which optimally sets the restart time in order to reduce error accumulation in time.
    Specifically, we monitor the margin between reconstruction loss of incremental updates and the minimum loss in SVD model.
    To reduce the complexity of monitoring, we theoretically develop a lower bound of SVD minimum loss for dynamic networks and use the bound to replace the minimum loss in monitoring.
    By setting a maximum tolerated error as a threshold, we can trigger SVD restart automatically when the margin exceeds this threshold.
    We prove that the time complexity of our method is linear with respect to the number of local dynamic changes, and our method is general across different types of dynamic networks.
    We conduct extensive experiments on several synthetic and real dynamic networks. The experimental results demonstrate
    that our proposed method significantly outperforms the existing methods by reducing 27\% to 42\% in terms of the maximum error for dynamic network reconstruction when fixing the number of restarts. Our method reduces the number of restarts by 25\% to 50\% when fixing the maximum error tolerated.
\end{abstract}

\section{Introduction}
    \begin{figure*}
    \centering
    \includegraphics[width=13.6cm]{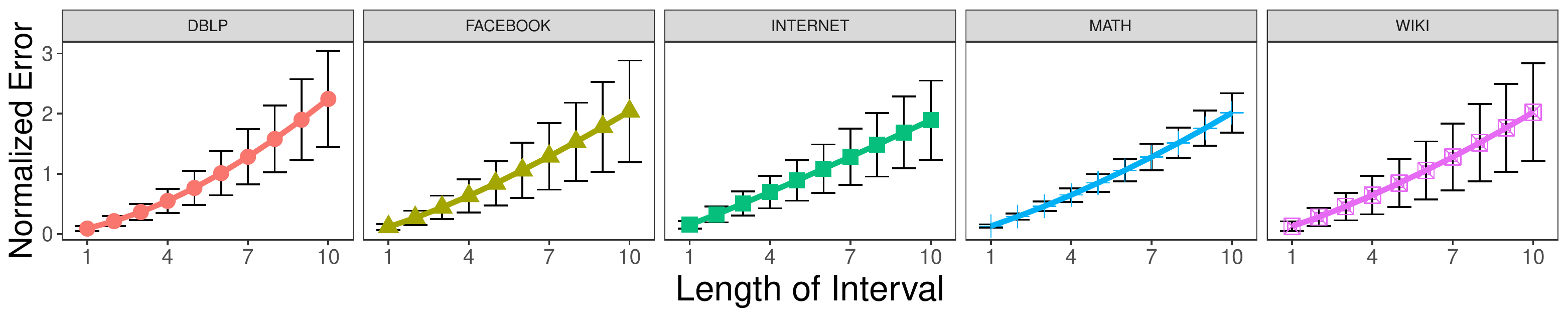}
    \caption{Preliminary study results. We calculate error accumulation in different time periods given the length of interval for calculating the error is fixed. Colored lines are the average and black error bars indicate the standard derivation. The results show that while the average error increases monotonously with the length of the interval, the variation also increases dramatically, suggesting non-uniform error accumulation in different time periods. See Section 3.1 for details.}
    \label{Preliminary2}
    \end{figure*}
    \begin{figure*}
    \centering
    \includegraphics[width=14.6cm]{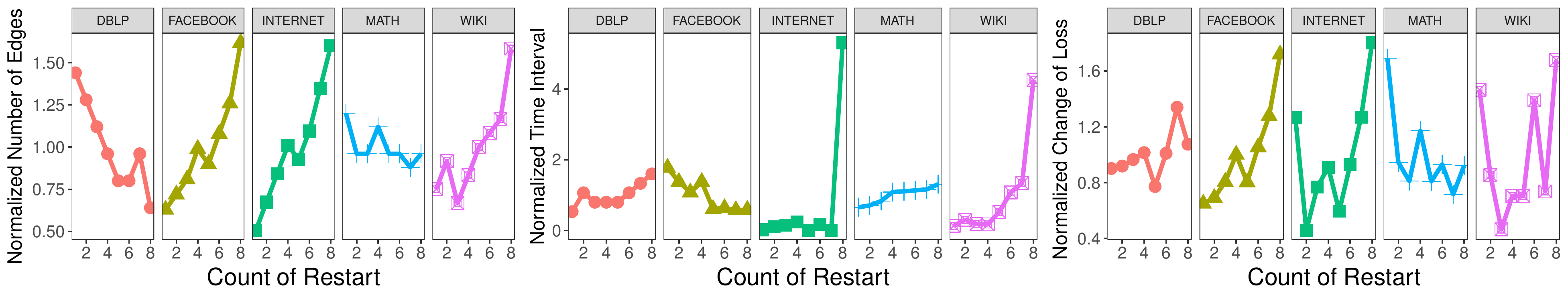}
    \caption{Preliminary study results. We reproduce the real evolving process of the network and restart SVD using the true error. Three measurements used in baselines are calculated between two consecutive restart time points: the number of edge changes, the time interval and the change of reconstruction loss. No obvious pattern is observed, suggesting that no simple correlation between the baselines and the true error can be induced. See Section 3.1 for details.}
    \label{Preliminary}
    \end{figure*}

Many methods have been proposed for network analysis, aiming to unlock the power of network data.
Among them, Singular Value Decomposition (SVD) \cite{eckart1936approximation} is a major player and proven to be successful in various network applications, such as link prediction \cite{ou2016asymmetric} and network parameter characterization \cite{chen2015fast}.

Most of the existing SVD methods, such as the widely used Lanczos Algorithm \cite{lanczos1950iteration}, are designed for static networks. In real world, however, networks are dynamic in nature. New nodes and edges may be added and existing nodes and edges may be deleted at any time. These dynamic changes make the previous SVD results unreliable. Incremental SVD methods \cite{brand2006fast,chen2015fast} are proposed to update previous SVD results to incorporate the changes without restarting the algorithm. However, as they all make some approximations in the incremental updating process, error accumulation is inevitable as the number of dynamic changes keeps increasing. In order to overcome serious deviation from the optimal SVD, incremental SVD methods still need to restart SVD at some points \cite{chen2014lwi,chen2015fast}. Surprisingly, the problem of when to restart SVD for dynamic networks largely remains open in literature. This problem is critical because the effectiveness and efficiency of SVD restarts heavily depend on the restart timing. On the one hand, a too early restart results in redundant calculation and a severe waste of computation resources. On the other hand, a too late restart leads to serious error accumulation.

How to optimally conduct SVD restart for dynamic networks is a challenging problem. The main reason is that dynamic changes of networks per se are sophisticated, and different changes have different impacts on SVD results, causing non-uniform error accumulation in different time periods or different batches of changes (as shown in Figure \ref{Preliminary2}). Therefore, this problem cannot be straightforwardly settled by heuristic methods, such as restarting after a certain time interval or a certain number of changes (as shown in Figure \ref{Preliminary}). \cite{chen2014lwi} propose to monitor the reconstruction loss and restart SVD when the loss exceeds a preset threshold. However, the reconstruction loss is constituted by both the intrinsic loss in SVD model and the accumulated error caused by approximations in incremental updates. Because the intrinsic loss in SVD model is not related to incremental updates and still exists even after restarting, it is the second part of the loss that should be used to instruct SVD restart, rather than the whole reconstruction loss.

To solve this problem, we propose TIMERS\footnote{The code is available at http://nrl.thumedia.org/} (for Theoretically Instructed Maximum-Error-bounded Restart of SVD), a novel approach to optimally set the restart time in order to reduce error accumulation in time.
Instead of monitoring the overall reconstruction loss \cite{chen2014lwi}, we propose a novel framework to monitor the margin between the reconstruction loss and the minimum loss in SVD model. We term this margin as error, which exactly measures the extra loss caused by incremental updates if we do not restart SVD. In order to avoid computing SVD minimum loss at each time slice as its complexity is the same as SVD restart, we theoretically explore a lower bound of the minimum loss based on matrix perturbation, and use the bound to calculate the error. In this way, the complexity of monitoring error is only linear with respect to the number of local dynamic changes.
Finally, we implement the maximum-error-bounded restart by setting a threshold to trigger SVD restarts automatically if the margin exceeds this threshold. We show that TIMERS has several desired merits including scalability, general applicability on different types of dynamic networks and flexibility to cooperate with current incremental SVD methods.

We demonstrate the effectiveness and efficiency of our method by conducting extensive experiments on several real and synthetic networks of different types, including weighted/unweighted, signed/unsigned ones. The results in dynamic network reconstruction demonstrate that, when fixing the number of restarts, TIMERS has a much lower maximum error (27\% to 42\%) and, when fixing the maximum error, TIMERS can significantly reduce the number of restarts (25\% to 50\%) over the existing methods. We also demonstrate that these improvements subsequently boost performance in dynamic network applications, such as link prediction and network parameter characterization.

The contributions of our paper are summarized as follows:
\begin{itemize}
\item We solve the critical open problem of when to restart SVD for dynamic networks by proposing TIMERS, a novel approach to optimally set the restart time in order to reduce error accumulation in time.
\item We theoretically explore a lower bound of SVD minimum loss on dynamic networks based on matrix perturbation, which can be efficiently calculated in linear time complexity. Based on the bound and setting a maximum tolerated error as a threshold, we implement TIMERS with theoretical foundation.
\item The experimental results demonstrate that our proposed method reduces 27\% to 42\% in terms of the maximum error when fixing the number of restarts, and reduces the number of restarts by 25\% to 50\% when fixing the maximum error tolerated.
\end{itemize}

The rest of the paper is organized as follows. In Section 2, we briefly review related works. We report observations of preliminary studies and formulate the problem in Section 3. In Section 4, we introduce TIMERS, our proposed approach. The experimental results are reported in Section 5. Finally, we conclude our findings in Section 6.

\section{Related Work}
Many methods have been proposed for network analysis \cite{ou2015non,wang2016structural,wang2017community}. Among them, Singular Value Decomposition (SVD) is proven to be successful in many important network applications. For example, the singular vectors of an adjacency matrix or its variations, such as the transition matrix, can be used to represent nodes in the network \cite{ou2016asymmetric}. Spectral analysis usually involves calculating SVD on the Laplacian matrix of a network \cite{belkin2001laplacian}. The singular values, or the equivalent eigenvalues for undirected networks, are associated with the network parameters \cite{chen2015fast}.

Incremental SVD methods have been proposed in matrix perturbation literature for decades and a brief summary of early works can be found in \cite{brand2006fast}. Basically, all early methods suffered from the problem of high time complexity or low accuracy. Later, more specific methods are proposed, for example, the algorithm to  process data with low-rank modifications \cite{brand2006fast}. Until now, incremental SVD remains an active research direction. Incremental SVD has been applied to multiple areas, such as recommendation \cite{sarwar2002incremental}, image processing \cite{brand2002incremental} and network analysis \cite{chen2015fast}. However, all methods accumulate error inevitably because some approximations have to be made in the incremental updating process \cite{chen2014lwi,chen2015fast}.

When to restart SVD, which is critical to incremental SVD, largely remains open in literature. Besides some heuristic methods that are commonly used in industry, such as restarting after a certain time interval or a certain number of changes, \cite{chen2014lwi} propose to restart SVD based on the overall reconstruction loss. As the overall construction loss includes the loss caused by the SVD model itself, this method is not suitable to set the SVD restart time.

Matrix sketching is another series of studies aiming to reduce the dimensionality of dynamic matrices \cite{liberty2013simple,cohen2015dimensionality}. The essential idea is to maintain a low dimensional ``sketch'' matrix to approximate the original matrix in certain properties. However,
 they usually require the dimensionality strictly much larger than the rank of the original matrix for theoretical guarantee \cite{cohen2015dimensionality}, which is not suitable for network data because the rank of the adjacency matrix is not rigorously low due to the power-law distribution \cite{xu2016robust}. In addition,
 most matrix sketching methods are based on the row-model (i.e. rows of the matrix come in stream) and thus cannot be applied to the setting of dynamic networks, whose elements change in arbitrary order \cite{liberty2013simple}. Our method, to the contrary, is especially designed for SVD restart problem on dynamic networks.

\section{Observations and Problem Formulation}
\subsection{Preliminary Study and Observations}\label{Baseline}
    In this section, we report some preliminary study results to investigate whether the existing approaches can solve the problem of SVD restart for dynamic networks. We summarize all currently available SVD restart methods, including two heuristic methods and one previously proposed method:
   \begin{itemize}
        \item Heu-FL: restart SVD after a fixed number of edges changed
        \item Heu-FT: restart SVD after a fixed amount of time passed
        \item LWI2 \cite{chen2014lwi}: restart SVD whenever the reconstruction loss exceeds a preset threshold
    \end{itemize}

    First, we show the fact that error accumulation is not uniform in different time periods in a dynamic network.
    Five real networks (see Section \ref{sec:dataset} for detail) are used. For each network, we divide the dynamic changes into $T=100$ time slices with an equal number of changes.
    Then, we set a starting time slice $t_0$ to run SVD and calculate error accumulation after a fixed interval $\Delta$ of time slices, i.e. restart at time slice $t_0$ and calculate the error at time slice $t_0 + \Delta$ (refer Eqn \eqref{SVDobj} for error calculation).
    We slide $t_0$ from $1$ to $T-\Delta$ to get the error in different time periods and report its mean value and standard deviation. The results of varying $\Delta$ from 1 to 10 are plotted in Figure \ref{Preliminary2}. While the average error increases monotonously with the interval length, the variation also increases dramatically, suggesting that error tends to vary greatly in different time periods.
    We get similar observations when using equal time duration in dividing time slices.

    Next, we show that the problem of setting SVD restart time for dynamic networks cannot be well solved by the aforementioned methods. Specifically, we first run SVD in each time slice as the ground truth. Then, we reproduce the real evolving process of the network by gradually adding the dynamic changes into the static network and restart SVD using the ground truth, i.e. when the true error exceeds a threshold. Three measurements are calculated between two consecutive restarts: the number of edges changed, the time interval and the change of reconstruction loss, corresponding to three baselines.
    From Figure \ref{Preliminary}, no obvious pattern can be induced for any measurement. For example, for the number of edges, three networks (FACEBOOK, INTERNET, WIKI) have a clear ascending pattern while one (DBLP) is descending and the other one (MATH) is unstable. For the time interval, outlier values appear in two networks (INTERNET, WIKI). The reconstruction loss also fails and vibrates violently.

    The above observations suggest that the existing methods cannot work well in finding the appropriate restart time for dynamic networks. The results also reveal that it is technically challenging to solve the problem because different dynamic networks have completely different dynamic patterns.
\subsection{Problem Formulation}
\subsubsection{Notations}\label{Notation}
Suppose we have an undirected network $G$ with $N$ nodes and $M$ edges. Following the commonly used notations, we use $\mathbf{A}$ to denote its adjacency matrix.
$\mathbf{A}\left(i,:\right)$ and $\mathbf{A}\left(:,i\right)$ stand for its $i^{th}$ row and column respectively.
$\mathbf{A}\left(i,j\right)$ is the weight of the edge between $i$ and $j$. For undirected networks, $\mathbf{A}$ is symmetric and $\mathbf{A}\left(i,j\right) = \mathbf{A}\left(j,i\right)$. $\mathbf{A}\left(i,j\right)$ is 0 or 1 for unweighted networks and any non-negative number for weighted networks. Negative values are allowed for signed networks. $\mathbf{A}^T$ denotes the transpose of $\mathbf{A}$ and $tr\left(\mathbf{A}\right)$ is the trace. For dynamic networks, we use subscript on the right to denote the time slice, e.g. $\mathbf{A}_{t_0}$, and the changed part of the network at time slice $t$ is denoted as $\Delta \mathbf{A}_{t}= \mathbf{A}_{t} - \mathbf{A}_{t-1}$.

\subsubsection{Maximum-Error-Bounded SVD Restart}
  SVD for networks aims to find the low rank decomposition of the similarity matrix $\mathbf{S}=\mathcal{S}\left(\mathbf{A}\right)$ by minimizing the following reconstruction loss function:
    \begin{small}
    \begin{equation}\label{SVDobj}
         \mathcal{J} =  \lVert \mathbf{S} - \mathbf{U} \cdot \mathbf{\Sigma} \cdot \mathbf{V}^T \rVert _F^2,
    \end{equation}
    \end{small}where $\mathcal{S}(\cdot)$ is a symmetric similarity function, $\mathbf{U},\mathbf{V} \in \mathbf{R}^{N \times k}$ are unitary matrices, $\mathbf{\Sigma} \in \mathbf{R}^{k \times k}$ is a non-negative diagonal matrix and $k$ is the dimensionality of the low-rank space. Some commonly used similarity functions include the high-order proximity, the Laplacian matrix, etc.
  The minimum loss, which is a function of $\mathbf{S}$ and $k$ and denoted as $\mathcal{L}\left(\mathbf{S},k\right)$, is proven to be \cite{stewart1990matrix}
  \begin{small}
    \begin{equation}\label{SVDoptObj}
        \mathcal{L}(\mathbf{S},k) = \min_{\mathbf{U},\mathbf{\Sigma},\mathbf{V}} \mathcal{J} = \sum_{l=k+1}^N \lambda_l^2,
    \end{equation}
  \end{small}where $\lambda_l$ are eigenvalues of $\mathbf{S}$ in descending magnitude.

  For dynamic networks, we have a static network $\mathbf{A}_{0}$ and its evolving data in $T$ time slices.
  The evolution of dynamic networks includes the phenomena of adding or deleting nodes and edges. As adding or deleting nodes can be simply incorporated by setting empty rows/columns in the matrix as in \cite{li2017attributed}, here we focus more on adding or deleting edges, and thus formulate the evolving process as the change of a fixed-dimension adjacency matrix $\Delta \mathbf{A}_{t},1\leq t\leq T$.

  The goal of SVD on dynamic networks is to efficiently calculate the low rank decomposition results as networks evolve, which contains two indispensable parts: incremental updates and SVD restart (i.e. recalculate optimal SVD).
  The former part usually focuses on designing a function $\mathcal{F}(\cdot)$ to efficiently update the results of previous time slice under certain approximations. Denote the change of the similarity matrix at time slice $t$ as $\Delta\mathbf{S}_t$ and we have:
  \begin{small}
    \begin{equation}
        \left[\mathbf{U}_{t},\mathbf{\Sigma}_{t},\mathbf{V}_{t}\right] =  \mathcal{F}\left(\left[\mathbf{U}_{t-1},\mathbf{\Sigma}_{t-1},\mathbf{V}_{t-1}\right],\mathbf{S}_{t-1},\Delta \mathbf{S}_{t}\right).
    \end{equation}
  \end{small} When the network evolves over time, the previous SVD results need to be updated accordingly. Although incremental SVD methods are proposed to address this issue by inducing approximations, error aggregation is inevitable in these methods, which necessitates restarting SVD at certain points.

  The most important question of SVD restart on dynamic networks is: \textbf{what are the appropriate time points}.
  Intuitively, we hope to reduce the number of restarts while keeping a low aggregated error. Here we transform the goal into a constrainted optimization problem. We set a tolerance threshold $\Theta$ on the error and then minimize the total number of restarts. Formally, we denote the error evaluation function as $\mathcal{G}(\cdot)$ and whether to restart at time slice $t$ as $c_t\in\left\{0,1\right\}$. Then the optimization objective is:
  \begin{small}
  \begin{equation}\label{MaxError}
  \begin{gathered}
        \min\limits_{c_1,...,c_{_T}}  \; \sum_{t=1}^T c_t  \\
        s.t. \; \; \mathcal{G}\left(\mathbf{S}_0...\mathbf{S}_T,\left[\mathbf{U}_{t}, \mathbf{\Sigma}_{t},\mathbf{V}_{t}\right],1\leq t \leq T\right) \leq \Theta \\
        \left[\mathbf{U}_{t},\mathbf{\Sigma}_{t},\mathbf{V}_{t}\right] = \left\{
            \begin{aligned}
                & Results \; of \; SVD  \; on \; \mathbf{S}_t  \qquad \qquad  if \;c_t = 1 \\
                & \mathcal{F}\left(\left[\mathbf{U}_{t-1},\mathbf{\Sigma}_{t-1},\mathbf{V}_{t-1}\right],\mathbf{S}_{t-1},\Delta \mathbf{S}_{t}\right)  else,
            \end{aligned}
            \right.
  \end{gathered}
  \end{equation}
  \end{small}where $\mathcal{F}(\cdot)$ is the updating function derived from incremental SVD methods.
    For the function $\mathcal{G}(\cdot)$, one approach is to directly use $\mathcal{J}(t)$, the reconstruction loss at time slice $t$. However, $\mathcal{J}(t)$ is constituted by both the minimum loss in SVD $\mathcal{L}(\mathbf{S},k)$, and the aggregated error induced by incremental updates. Since $\mathcal{L}(\mathbf{S},k)$ is intrinsic in SVD and cannot be reduced by SVD restart, $\mathcal{L}(\mathbf{S},k)$ should not be counted to guide SVD restart.
    Instead, the error induced by incremental SVD methods, i.e. the margin between the reconstruction loss and the SVD minimum loss, should be the right measure to guide SVD restart. As most applications are sensitive to the maximum error, here we define $\mathcal{G}(\cdot)$ as:
    \begin{small}
    \begin{equation}\label{NewProposeObj1}
        \mathcal{G} =  \max_{1 \leq t \leq T}   \frac{\mathcal{J}(t)-\mathcal{L}(\mathbf{S}_t,k)}{\mathcal{L}(\mathbf{S}_t,k)},
    \end{equation}
    \end{small}
    where $\mathcal{J}(t)$ is the reconstruction loss at time slice $t$. Putting Eqs. \eqref{MaxError}\eqref{NewProposeObj1} together, we have the formulation of maximum-error-bounded SVD restart on dynamic networks.
\section{TIMERS: The Proposed Method}
\subsection{The Framework for Error Monitoring}
   It is easy to see that
   \begin{small}
    \begin{equation}\label{NewProposeObj2}
        \begin{aligned}
        \max_{1 \leq t \leq T} & \frac{\mathcal{J}(t) - \mathcal{L}(\mathbf{S}_t,k)}{\mathcal{L}(\mathbf{S}_t,k)} \leq \Theta \\
            \Leftrightarrow & \frac{\mathcal{J}(t) - \mathcal{L}(\mathbf{S}_t,k)}{\mathcal{L}(\mathbf{S}_t,k)} \leq \Theta \quad \forall 1 \leq t \leq T.
        \end{aligned}
    \end{equation}
    \end{small}To ensure the above equation satisfied, one straightforward way is to monitor the error and restart whenever the error exceeds $\Theta$.
    Intuitively, the total number of restarts is reduced because we only restart when the error reaches the threshold.
    However, this is still problematic because directly calculating $\mathcal{L}(\mathbf{S}_t,k)$ has the same time complexity as SVD restart. Alternatively, if we can find a lower bound $B(t)> 0$ so that
    \begin{small}
    \begin{equation}\label{ProposeRelex}
        \mathcal{L}(\mathbf{S}_t,k) \geq B(t) \Rightarrow \frac{\mathcal{J}(t) - \mathcal{L}(\mathbf{S}_t,k)}{\mathcal{L}(\mathbf{S}_t,k)} \leq \frac{\mathcal{J}(t) - B(t)}{B(t)}.
    \end{equation}
     \end{small}Then, we can relax Eq. \eqref{NewProposeObj2} to
    \begin{small}
    \begin{equation}\label{NewProposeObj3}
        \frac{\mathcal{J}(t) - B(t)}{B(t)} \leq \Theta \quad \forall 1 \leq t \leq T.
    \end{equation}
    \end{small}After that, we can monitor Eq. \eqref{NewProposeObj3} instead and restart SVD whenever it is not satisfied.
    The remaining question is how to find $B(t)$ that can be efficiently calculated.

\subsection{A Lower Bound of SVD Minimum Loss}
    Next, we derive the lower bound based on matrix perturbation.
    \begin{theorem}[A Lower Bound of SVD Minimum Loss]
        If $\mathbf{S}$ and $\Delta \mathbf{S}$ are symmetric matrices, then:
        \begin{small}
        \begin{equation}\label{PertInequ}
         \mathcal{L}(\mathbf{S} + \Delta \mathbf{S},k) \geq \mathcal{L}(\mathbf{S},k) + \Delta tr^2(\mathbf{S}+\Delta \mathbf{S},\mathbf{S}) - \sum_{l=1}^k \lambda_l,
        \end{equation}
        \end{small}where $\lambda_1\geq \lambda_2...\geq\lambda_k$ are the top-$k$ eigenvalues of $\nabla_{S^2} = \mathbf{S}\cdot\Delta\mathbf{S} +\Delta\mathbf{S}\cdot\mathbf{S} + \Delta\mathbf{ S}\cdot \Delta\mathbf{S} $, and
        \begin{small}
        $$\Delta tr^2(\mathbf{S}+\Delta \mathbf{S},\mathbf{S}) = tr\left( (\mathbf{S}+\Delta \mathbf{S})\cdot(\mathbf{S}+\Delta \mathbf{S}) \right) - tr(\mathbf{S}\cdot\mathbf{S}).$$
        \end{small}
    \end{theorem}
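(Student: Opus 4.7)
The plan is to rewrite $\mathcal{L}(\mathbf{S},k)$ using the trace, expand $(\mathbf{S}+\Delta\mathbf{S})^2$ algebraically, and then control the top-$k$ eigenvalue sum of the perturbed matrix by a Ky Fan / Weyl-type inequality applied to a symmetric additive decomposition.

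First, since $\mathbf{S}$ is symmetric, $\sum_{l=1}^N \lambda_l^2(\mathbf{S}) = tr(\mathbf{S}\cdot\mathbf{S})$, so the closed form \eqref{SVDoptObj} can be rewritten as
\begin{small}
\begin{equation*}
\mathcal{L}(\mathbf{S},k) \;=\; tr(\mathbf{S}\cdot\mathbf{S}) \;-\; \sum_{l=1}^k \lambda_l^2(\mathbf{S}),
\end{equation*}
\end{small}
and similarly for $\mathbf{S}+\Delta\mathbf{S}$. Next I would expand $(\mathbf{S}+\Delta\mathbf{S})^2 = \mathbf{S}^2 + \nabla_{S^2}$, which is the defining identity of $\nabla_{S^2}$. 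Taking traces yields
\begin{small}
\begin{equation*}
tr\bigl((\mathbf{S}+\Delta\mathbf{S})^2\bigr) \;=\; tr(\mathbf{S}\cdot\mathbf{S}) + tr(\nabla_{S^2}) \;=\; tr(\mathbf{S}\cdot\mathbf{S}) + \Delta tr^2(\mathbf{S}+\Delta\mathbf{S},\mathbf{S}).
\end{equation*}
\end{small}

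The crux is then to upper-bound $\sum_{l=1}^k \lambda_l^2(\mathbf{S}+\Delta\mathbf{S})$, which is exactly the sum of the top-$k$ eigenvalues of the symmetric matrix $(\mathbf{S}+\Delta\mathbf{S})^2 = \mathbf{S}^2 + \nabla_{S^2}$. Here I invoke the Ky Fan maximum principle (equivalently, Weyl's inequality for partial sums of eigenvalues of a sum of Hermitian matrices): for symmetric $A,B$,
\begin{small}
\begin{equation*}
\sum_{l=1}^k \lambda_l(A+B) \;\le\; \sum_{l=1}^k \lambda_l(A) + \sum_{l=1}^k \lambda_l(B).
\end{equation*}
\end{small}
Setting $A=\mathbf{S}^2$ and $B=\nabla_{S^2}$, and observing that the top-$k$ eigenvalues of $\mathbf{S}^2$ are exactly $\lambda_1^2(\mathbf{S}),\dots,\lambda_k^2(\mathbf{S})$ (since the paper's convention orders eigenvalues by descending magnitude), I get
\begin{small}
\begin{equation*}
\sum_{l=1}^k \lambda_l^2(\mathbf{S}+\Delta\mathbf{S}) \;\le\; \sum_{l=1}^k \lambda_l^2(\mathbf{S}) + \sum_{l=1}^k \lambda_l,
\end{equation*}
\end{small}
with $\lambda_l$ the top-$k$ eigenvalues of $\nabla_{S^2}$.

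Plugging these two identities/inequalities into the rewritten $\mathcal{L}(\mathbf{S}+\Delta\mathbf{S},k) = tr((\mathbf{S}+\Delta\mathbf{S})^2) - \sum_{l=1}^k \lambda_l^2(\mathbf{S}+\Delta\mathbf{S})$ and subtracting out $\mathcal{L}(\mathbf{S},k)$ gives exactly \eqref{PertInequ}. The only nontrivial step is the Ky Fan bound; everything else is a direct algebraic expansion. I expect the minor subtlety to be justifying that ``top-$k$ by magnitude'' for $\mathbf{S}$ coincides with ``top-$k$ eigenvalues'' of $\mathbf{S}^2$, but this is immediate because squaring a real symmetric matrix maps its eigenvalues $\mu_i$ to $\mu_i^2$ while keeping eigenvectors fixed, so the largest $k$ of $\mu_i^2$ are precisely the $k$ values $\mu_i$ of largest absolute value.
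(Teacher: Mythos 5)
Your proposal is correct and is essentially the same argument as the paper's own proof: rewrite $\mathcal{L}(\cdot,k)$ as $tr(\mathbf{S}\cdot\mathbf{S}) - \Lambda^{k}(\mathbf{S}\cdot\mathbf{S})$, decompose $(\mathbf{S}+\Delta\mathbf{S})^2 = \mathbf{S}^2 + \nabla_{S^2}$, and apply the Ky Fan subadditivity of the top-$k$ eigenvalue sum. Your extra remark on why the top-$k$ eigenvalues of $\mathbf{S}^2$ are the squares of the top-$k$-by-magnitude eigenvalues of $\mathbf{S}$ is a detail the paper leaves implicit, but there is no substantive difference in the approach.
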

    \begin{proof}
        Denote $\Lambda^{k}(\mathbf{Q})$ as the sum of the top-k eigenvalues for any symmetric matrix $\mathbf{Q}$, i.e.
        \begin{small}
        \begin{equation}\label{Eigensum}
            \Lambda^{k}(\mathbf{Q}) = \sum_{l=1}^k \lambda_l^\prime,
        \end{equation}
        \end{small}where $\lambda_1^\prime \geq \lambda_2^\prime ... \geq \lambda_k^\prime$ are the top-k eigenvalues of $\mathbf{Q}$. Then, we can rewrite $\mathcal{L}(\mathbf{S},k)$, defined in Eq. \eqref{SVDoptObj}, as:
        \begin{small}
        \begin{equation}\label{SVDoptObj2}
        \mathcal{L}(\mathbf{S},k) = \sum_{l=1}^{N} \lambda_l^2 - \sum_{l=1}^{k} \lambda_l^2 = tr(\mathbf{S} \cdot \mathbf{S}^T) - \Lambda^{k}(\mathbf{S}\cdot \mathbf{S}^T).
        \end{equation}
        \end{small}According to matrix perturbation theory \cite{stewart1990matrix}, for any two symmetric matrices $\mathbf{P}$ and $\mathbf{Q}$, we have
        \begin{small}
        \begin{equation}\label{EigensumPertub}
            \Lambda^{k}(\mathbf{P + Q}) \leq \Lambda^{k}(\mathbf{P}) + \Lambda^{k}(\mathbf{Q}).
        \end{equation}
        \end{small}Let $\mathbf{P} = \mathbf{S} \cdot \mathbf{S},\; \mathbf{P + Q} = (\mathbf{S} + \Delta \mathbf{S}) \cdot (\mathbf{S} + \Delta \mathbf{S})$. By putting Eq. \eqref{SVDoptObj2} and Eq. \eqref{EigensumPertub} together, we finish the proof.
    \end{proof}
    The theorem shows if we have calculated the minimum loss for $\mathbf{S}$, without calculating SVD on $\mathbf{S}+\Delta \mathbf{S}$, matrix perturbation can lead to a lower bound on the new minimum loss by treating $\Delta \mathbf{S}$ as a perturbation of the original similarity $\mathbf{S}$. From the theorem, we can set:
    \begin{small}
    \begin{equation}\label{BoundCal2}
    \begin{split}
        B(t) =  & \mathcal{L}(\mathbf{S}_{t^\prime},k) + \Delta tr^2(\mathbf{S}_t,\mathbf{S}_{t^\prime}) - \Lambda^{k}(\nabla_{S^2}),
    \end{split}
    \end{equation}
    \end{small}where $\Delta\mathbf{S} = \mathbf{S}_t - \mathbf{S}_{t^\prime},\nabla_{S^2}=\mathbf{S}_{t^\prime}\cdot\Delta\mathbf{S}  +\Delta\mathbf{S}\cdot\mathbf{S}_{t^\prime}+ \Delta\mathbf{S}\cdot\Delta\mathbf{S}$ and $t^\prime$ is the last time when we calculated SVD.

     As discussed earlier, efficiency is a key issue. Here we analyze the time complexity of calculating the bound. First, we have the following two lemmas (the proofs are omitted due to limited space):
    \begin{lemma}
        The time complexity of calculating $\mathcal{L}(\mathbf{S},k)$ after obtaining the optimal $\mathbf{U},\mathbf{\Sigma},\mathbf{V}$ in SVD is $O(M+k)$, where $M$ in the number of non-zero elements in $\mathbf{S}$.
    \end{lemma}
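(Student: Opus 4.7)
The plan is to exploit the identity that appears implicitly as Eq.~\eqref{SVDoptObj2} in the paper, namely
\begin{small}
\begin{equation*}
\mathcal{L}(\mathbf{S},k) \;=\; tr(\mathbf{S}\cdot\mathbf{S}^T) \;-\; \sum_{l=1}^{k}\lambda_l^2,
\end{equation*}
\end{small}and bound the cost of each of the two terms separately, assuming $\mathbf{U},\mathbf{\Sigma},\mathbf{V}$ have already been computed.

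First I would handle the trace term. Because $\mathbf{S}$ is symmetric, $tr(\mathbf{S}\cdot\mathbf{S}^T)=\sum_{i,j}\mathbf{S}(i,j)^2$ is nothing but the squared Frobenius norm of $\mathbf{S}$. Since $\mathbf{S}$ has only $M$ nonzero entries and all zero entries contribute nothing to the sum, we can compute this in $O(M)$ time by a single pass over the nonzero entries of the sparse representation of $\mathbf{S}$ (no actual matrix product is performed). This gives the $O(M)$ component of the claimed bound.

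Second I would handle $\sum_{l=1}^{k}\lambda_l^2$. Once the SVD $\mathbf{U}\,\mathbf{\Sigma}\,\mathbf{V}^T$ is in hand, the top-$k$ singular values (which coincide in magnitude with the top-$k$ eigenvalues of $\mathbf{S}$, up to sign, for symmetric $\mathbf{S}$) are stored on the diagonal of $\mathbf{\Sigma}\in\mathbb{R}^{k\times k}$. So reading them off and summing their squares takes exactly $k$ multiplications and $k-1$ additions, i.e.\ $O(k)$ time. Subtracting the two quantities is $O(1)$, yielding the overall bound $O(M+k)$.

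There is no real obstacle: the lemma is essentially an accounting statement. The only subtlety worth flagging explicitly is that one must compute $tr(\mathbf{S}\mathbf{S}^T)$ directly as a sum of squared nonzero entries rather than by forming $\mathbf{S}\mathbf{S}^T$ and reading off its diagonal — forming the product would cost far more than $O(M)$ when $\mathbf{S}$ is sparse but $\mathbf{S}\mathbf{S}^T$ is not. With that caveat, the decomposition above delivers the stated $O(M+k)$ complexity.
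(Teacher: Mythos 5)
Your argument is correct: the paper omits its own proof of this lemma for space, but your accounting is exactly the natural one implied by the identity the paper itself displays as Eq.~\eqref{SVDoptObj2}, namely $\mathcal{L}(\mathbf{S},k)=tr(\mathbf{S}\cdot\mathbf{S}^T)-\Lambda^{k}(\mathbf{S}\cdot\mathbf{S}^T)$, with the Frobenius term summed over the $M$ nonzeros in $O(M)$ and the top-$k$ squared singular values read off the diagonal of $\mathbf{\Sigma}$ in $O(k)$. Your caveat about not forming $\mathbf{S}\mathbf{S}^T$ explicitly is the right one to flag, and the sign/magnitude remark relating eigenvalues of the symmetric $\mathbf{S}$ to its singular values correctly justifies reusing $\mathbf{\Sigma}$ for the second term.
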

    \begin{lemma}
        Calculating the change of the reconstruction loss has the following time complexities:
        \begin{itemize}[leftmargin=0.2cm]
        \item $O(kM_S)$, if the similarity matrix changes $\Delta \mathbf{S}$ and $M_S$ is the number of non-zero elements in $\Delta \mathbf{S}$;
        \item $O(kd_i + k^2)$, if vectors of node $i$ change and $d_i$ is its degree.
        \end{itemize}
    \end{lemma}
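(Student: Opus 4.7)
The plan is to expand
\begin{small}
\[\mathcal{J}=\lVert\mathbf{S}-\mathbf{U}\mathbf{\Sigma}\mathbf{V}^T\rVert_F^2=tr(\mathbf{S}\mathbf{S}^T)-2\,tr(\mathbf{U}^T\mathbf{S}\mathbf{V}\mathbf{\Sigma})+tr(\mathbf{\Sigma}^2)\]
\end{small}
by invoking the orthonormality $\mathbf{U}^T\mathbf{U}=\mathbf{V}^T\mathbf{V}=\mathbf{I}$, and then count arithmetic operations for each trace term under each of the two perturbation scenarios separately.

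For the first case, fix $\mathbf{U},\mathbf{\Sigma},\mathbf{V}$ and let $\mathbf{S}\to\mathbf{S}+\Delta\mathbf{S}$. The change decomposes as
\begin{small}
\[\Delta\mathcal{J}=2\,tr(\mathbf{S}^T\Delta\mathbf{S})+tr(\Delta\mathbf{S}^T\Delta\mathbf{S})-2\,tr(\mathbf{U}^T\Delta\mathbf{S}\mathbf{V}\mathbf{\Sigma}).\]
\end{small}
The first two traces are sums over the $M_S$ non-zero entries of $\Delta\mathbf{S}$ (assuming $O(1)$ lookup of entries of $\mathbf{S}$) and each costs $O(M_S)$. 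The third equals $\sum_{l=1}^k \sigma_l\,\mathbf{u}_l^T\Delta\mathbf{S}\mathbf{v}_l$; each of the $k$ bilinear forms iterates only over the sparsity pattern of $\Delta\mathbf{S}$ in $O(M_S)$ time, giving $O(kM_S)$ overall.

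For the second case, only $\mathbf{U}(i,:)$ and $\mathbf{V}(i,:)$ change, say by row vectors $\Delta\mathbf{u},\Delta\mathbf{v}\in\mathbb{R}^{1\times k}$. The rank-one structure of the update yields
\begin{small}
\[\Delta(\mathbf{U}\mathbf{\Sigma}\mathbf{V}^T)=\mathbf{e}_i\Delta\mathbf{u}\mathbf{\Sigma}\mathbf{V}^T+\mathbf{U}\mathbf{\Sigma}\Delta\mathbf{v}^T\mathbf{e}_i^T+\mathbf{e}_i\Delta\mathbf{u}\mathbf{\Sigma}\Delta\mathbf{v}^T\mathbf{e}_i^T,\]
\end{small}
which is supported only on row $i$ and column $i$. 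The interaction with $\mathbf{S}$ in $tr(\mathbf{S}^T\Delta(\mathbf{U}\mathbf{\Sigma}\mathbf{V}^T))$ therefore restricts to the $O(d_i)$ non-zero entries of $\mathbf{S}(i,:)$ and $\mathbf{S}(:,i)$; evaluating each contribution such as $\Delta\mathbf{u}\mathbf{\Sigma}\mathbf{V}(q,:)^T$ is an $O(k)$ inner product, giving $O(kd_i)$. The remaining terms $tr\!\left((\mathbf{U}\mathbf{\Sigma}\mathbf{V}^T)^T\Delta(\mathbf{U}\mathbf{\Sigma}\mathbf{V}^T)\right)$ and $\lVert\Delta(\mathbf{U}\mathbf{\Sigma}\mathbf{V}^T)\rVert_F^2$ nominally sum over all $N$ positions of row $i$ and column $i$, but orthonormality collapses them: for example,
\begin{small}
\[\sum_q \left(\Delta\mathbf{u}\mathbf{\Sigma}\mathbf{V}(q,:)^T\right)^2=\Delta\mathbf{u}\mathbf{\Sigma}(\mathbf{V}^T\mathbf{V})\mathbf{\Sigma}\Delta\mathbf{u}^T=\Delta\mathbf{u}\mathbf{\Sigma}^2\Delta\mathbf{u}^T,\]
\end{small}
and analogously $\sum_q \mathbf{U}(i,:)\mathbf{\Sigma}\mathbf{V}(q,:)^T\cdot\Delta\mathbf{u}\mathbf{\Sigma}\mathbf{V}(q,:)^T=\mathbf{U}(i,:)\mathbf{\Sigma}^2\Delta\mathbf{u}^T$. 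Each such contraction costs $O(k^2)$ at worst through the diagonal $\mathbf{\Sigma}$, plus $O(k)$ corrections for the doubly-counted $(i,i)$ entry. Summing gives $O(kd_i+k^2)$.

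The main obstacle is the second case: one must exploit $\mathbf{U}^T\mathbf{U}=\mathbf{V}^T\mathbf{V}=\mathbf{I}$ to force the implicit $N$-sums over row $i$ and column $i$ of $\Delta(\mathbf{U}\mathbf{\Sigma}\mathbf{V}^T)$ to collapse into $k\times k$ bookkeeping, rather than paying the naive $O(Nk)$ cost for expanding the dense row and column. Once this is done, the $d_i$ factor appears only in the sparse interaction with $\mathbf{S}$, while the $k^2$ absorbs the residual $k$-dimensional contractions from the terms involving only the factorization.
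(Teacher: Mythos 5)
The paper explicitly omits the proofs of Lemmas 1 and 2 (``the proofs are omitted due to limited space''), so there is no reference argument to compare against; judged on its own, your derivation is correct and complete. The expansion of $\mathcal{J}$ into trace terms, the sparsity argument giving $O(kM_S)$ in the first case, and the observation that the dense row-$i$/column-$i$ support of $\Delta(\mathbf{U}\mathbf{\Sigma}\mathbf{V}^T)$ must be collapsed via $\mathbf{U}^T\mathbf{U}=\mathbf{V}^T\mathbf{V}=\mathbf{I}$ rather than expanded at $O(Nk)$ cost are exactly the right ingredients. One small remark: with $\mathbf{\Sigma}$ diagonal and exact orthonormality your contractions actually cost only $O(k)$, so the stated $k^2$ term is where the lemma quietly accommodates the realistic setting in which incrementally updated factors are only approximately orthonormal and one must carry $k\times k$ Gram matrices $\mathbf{U}^T\mathbf{U}$, $\mathbf{V}^T\mathbf{V}$ through the contraction --- your ``$O(k^2)$ at worst'' covers this, and in any case the claimed bound is an upper bound, so the lemma holds either way.
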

    Lemma 1 shows it takes little additional time to compute the minimum loss if we have calculated the results of SVD. Lemma 2 shows we can efficiently compute the change of the reconstruction loss when the network evolves and the SVD results are incrementally updated. These lemmas ensure calculating $\mathcal{J}(t)$ is not time-consuming. Then, we analyze the time complexity of calculating $B(t)$.
    \begin{theorem}
        The time complexity of calculating $B(t)$ in Eq. \eqref{BoundCal2} is $O(M_S + M_L k + N_L k^2)$, where $M_S$ is the number of the non-zero elements in $\Delta \mathbf{S}$, and $N_L,M_L$ are the number of the non-zero rows and elements in $\nabla_{S^2}$ respectively.
    \end{theorem}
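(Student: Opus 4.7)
The plan is to split $B(t) = \mathcal{L}(\mathbf{S}_{t^\prime},k) + \Delta tr^2(\mathbf{S}_t,\mathbf{S}_{t^\prime}) - \Lambda^k(\nabla_{S^2})$ into three pieces and bound the cost of each separately. The first piece $\mathcal{L}(\mathbf{S}_{t^\prime},k)$ is a scalar that was already produced at the last restart time $t^\prime$ (Lemma 1), so it is simply cached and contributes $O(1)$ to the per-slice work.

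For the trace increment, I would expand
\begin{small}
\begin{equation*}
\Delta tr^2(\mathbf{S}_t,\mathbf{S}_{t^\prime}) = 2\,tr(\mathbf{S}_{t^\prime}\cdot \Delta\mathbf{S}) + tr(\Delta\mathbf{S}\cdot \Delta\mathbf{S}),
\end{equation*}
\end{small}
which absorbs the unchanged $tr(\mathbf{S}_{t^\prime}\cdot\mathbf{S}_{t^\prime})$ term. Each of the two traces can be evaluated by iterating over the non-zero entries of $\Delta\mathbf{S}$ and performing $O(1)$ hash lookups into the stored $\mathbf{S}_{t^\prime}$, yielding $O(M_S)$ total. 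This accounts for the $M_S$ component of the bound.

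The main step is the top-$k$ eigen-sum $\Lambda^k(\nabla_{S^2})$. The key observation is that $\nabla_{S^2} = \mathbf{S}_{t^\prime}\cdot\Delta\mathbf{S} + \Delta\mathbf{S}\cdot\mathbf{S}_{t^\prime} + \Delta\mathbf{S}\cdot\Delta\mathbf{S}$ is symmetric and inherits its sparsity pattern from $\Delta\mathbf{S}$: its range is contained in the span of the rows/columns touched by $\Delta\mathbf{S}$, so the matrix has only $N_L$ non-trivial rows and $M_L$ non-zero entries overall. I would then compute the top-$k$ eigenvalues with a Lanczos-type iteration started inside this active subspace. Each matrix–vector product $\nabla_{S^2}v$ costs $O(M_L)$, and $O(k)$ products are needed, contributing $O(M_L k)$; reorthogonalising $O(k)$ Lanczos vectors whose support lies in the $N_L$-dimensional active subspace contributes $O(N_L k^2)$. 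Adding the three pieces gives the claimed bound $O(M_S + M_L k + N_L k^2)$.

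The main obstacle is justifying that the eigenvalue computation scales with $M_L,N_L$ rather than the ambient $N$. Two sub-points must be checked: first, the products $\mathbf{S}_{t^\prime}\cdot\Delta\mathbf{S}$ and $\Delta\mathbf{S}\cdot\mathbf{S}_{t^\prime}$, although they involve the dense-looking $\mathbf{S}_{t^\prime}$, only produce non-zero rows (respectively columns) at indices that are already non-zero in $\Delta\mathbf{S}$, so the sparsity parameters $M_L,N_L$ are well-defined and small; second, because every Lanczos iterate remains in the span of the non-zero rows of $\nabla_{S^2}$, both the matrix-vector products and the inner products used for reorthogonalisation can be restricted to those $N_L$ coordinates. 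Once these two observations are formalised, the cost accounting above goes through and yields the theorem.
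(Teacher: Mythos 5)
Your proposal matches the paper's own argument: the paper likewise evaluates $\Delta tr^2$ by summing the changed squared entries over the support of $\Delta \mathbf{S}$ in $O(M_S)$ (your expansion $2\,tr(\mathbf{S}_{t^\prime}\Delta\mathbf{S})+tr(\Delta\mathbf{S}\,\Delta\mathbf{S})$ is the same computation written differently) and invokes Lanczos on $\nabla_{S^2}$ for the $O(M_L k + N_L k^2)$ term. One small correction to your sparsity discussion: $\mathbf{S}_{t^\prime}\cdot\Delta\mathbf{S}$ restricts the non-zero \emph{columns} (not rows) to those touched by $\Delta\mathbf{S}$, so the non-zero rows of $\nabla_{S^2}$ also include neighbors of changed nodes — but since $N_L$ and $M_L$ are defined directly on $\nabla_{S^2}$, the stated complexity is unaffected.
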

    \begin{proof}
        Because all matrices are symmetric, we have
        \begin{small}
        $$\Delta tr^2(\mathbf{S}+\Delta \mathbf{S},\mathbf{S}) = \sum_{\Delta\mathbf{S}(i,j)\neq 0} \left[(\mathbf{S}+\Delta \mathbf{S})(i,j)^2 - \mathbf{S}(i,j)^2 \right].$$
        \end{small}Then, we can use numeric methods, such as the Lanczos algorithm \cite{lanczos1950iteration}, to calculate the top-$k$ eigenvalues of $\nabla_{S^2}$, which is known to be $O(M_L k + N_L k^2)$. Putting them together leads to the result.
    \end{proof}
    This time complexity includes two parts. The first part is equal to the number of changed elements in the similarity matrix. The second part shows that, instead of calculating the top-$k$ eigenvalues of $\mathbf{S}+\Delta \mathbf{S}$, we only need to calculate the top-$k$ eigenvalues of $\nabla_{S^2}$ which has the same scale of non-zero elements as $\mathbf{S} \cdot \Delta \mathbf{S}$. When the changed parts only occupy a tiny fraction of the whole network, $\mathbf{S} \cdot \Delta \mathbf{S}$ will be local and have much fewer non-zero elements than $\mathbf{S}+\Delta \mathbf{S}$. To make it more clear, we calculate the exact results for two typical types of networks:
    \begin{itemize}[leftmargin = 0.3cm]
        \item If every node has a equal probability of adding new edges, we have: $M_L \approx 2 d_{avg}M_S$, where $d_{avg}$ is the average degree of the network .
        \item For Barabasi Albert model \cite{barabasi1999emergence}, a typical example of preferential attachment networks, we have: $M_L \approx \frac{12}{\pi^2}\left[log(d_{max})+\gamma\right]M_S$, where $d_{max}$ is the maximum degree of the network and $\gamma \approx 0.6$ is a constant.
    \end{itemize}
    In short, the complexity of calculating $B(t)$ is only related to the local dynamic changes but does not involve the whole network for calculation.

    In these theorems, we only require the matrices to be symmetric and sparse. Therefore, the lower bound is general across different types of networks (e.g. weighted or unweighted, signed or unsigned), and different dynamic scenarios (e.g. add or delete edges, adjust edge weights).

    Algorithm \ref{TIR} shows the overall method. It is straightforward to see that our method has no specific requirement on the updating method, i.e. TIMERS is flexible to cooperate with any incremental SVD method. The overall time complexity includes the piecewise linear complexities between restarts and the complexity of SVD restarts.
    \begin{algorithm}[t]
    \caption{TIMERS: $\mathbf{T}$heoretically $\mathbf{I}$nstructed $\mathbf{M}$aximum $\mathbf{E}$rror-bounded $\mathbf{R}$estart of $\mathbf{S}$VD}
    \label{TIR}
    \begin{algorithmic}[1]
    \REQUIRE Static Adjacency Matrix $\mathbf{A}_0$, Dynamic Changes $\Delta \mathbf{A}_1...\Delta \mathbf{A}_T$, Similarity Function $\mathcal{S}(\cdot)$, Dimensionality $k$, Error Threshold $\Theta$, Incremental SVD method $\mathcal{F}(\cdot)$
    \ENSURE  SVD results in each time slice $\left[ \mathbf{U}_t,\mathbf{\Sigma}_t,\mathbf{V}_t \right]$
    \STATE Calculate the initial similarity $\mathbf{S}_0 = \mathcal{S}(\mathbf{A}_0)$
    \STATE Calculate SVD on $\mathbf{S}_0$ to obtain $\left[ \mathbf{U}_0,\mathbf{\Sigma}_0,\mathbf{V}_0 \right]$
    \FOR{t in 1:T}
        \STATE Calculate the similarity change in that time slice $\Delta \mathbf{S}_t$
        \STATE Use $\mathcal{F}(\cdot)$ to get updated SVD results $\left[ \mathbf{U}_t,\mathbf{\Sigma}_t,\mathbf{V}_t \right]$
        \STATE Compute loss $\mathcal{J}(t)$ and bound $B(t)$ using \eqref{BoundCal2}
        \IF{ $\frac{\mathcal{J}(t) - B(t)}{B(t)} > \Theta$}
          \STATE Calculate SVD on $\mathbf{S}_t$ to obtain new $\left[\mathbf{U}_t,\mathbf{\Sigma}_t,\mathbf{V}_t \right]$
        \ENDIF
        \STATE Return $\left[\mathbf{U}_t,\mathbf{\Sigma}_t,\mathbf{V}_t \right]$
    \ENDFOR
    \end{algorithmic}
    \end{algorithm}

\section{Experiments}
In this section, we conduct extensive experiments to evaluate our method and demonstrate its advantages through comparative study. Note that the baselines have been introduced in Section 3.1.
\subsection{Datasets}\label{sec:dataset}
    To comprehensively evaluate the effectiveness of TIMERS, we apply it to five real dynamic networks\footnote{All networks are publicly available at http://snap.stanford.edu/ or http://konect.uni-koblenz.de/} from different domains: three social networks, one co-author network and one Internet topology network. These networks are of different types including weighted or unweighted, signed or unsigned ones, and all of them have real timestamps.
    \begin{itemize}[leftmargin=0.27cm]
    \item FACEBOOK, MATH, WIKI: they are online social networks in Facebook, MathOverflow and Wikipedia. A node represents a user and an edge represents the social link between two users. The edges are unweighted and unsigned in FACEBOOK, weighted in MATH and signed in WIKI. Negative edges exist for representing conflicts between users, e.g. edit-wars.
    \item DBLP is a collaboration network of computer science researchers. Since authors can have multiple common publications, the network is weighted.
    \item INTERNET is a network of autonomous system connections. The edges are weighted to indicate the number of connections between two systems.
    \end{itemize}
   To compare the effectiveness of different methods, SVD has to be conducted in each time slice as the ground truth, which is very time consuming. Therefore, we sample subsets of the original networks and the sizes of the sampled networks are listed in Table \ref{Datasets}. For each network, we divide the edges into static and evolving part according to their timestamp order. We further divide the evolving edges into 50 time slices with an equal number of edges to simulate its evolving process.

    To get insightful understanding, we also experiment on three synthetic networks whose structures are controllable. We simulate some important dynamic characteristics of real networks with the following methods:
    \begin{itemize}[leftmargin=0.4cm]
        \item RANDOM is generated by random graph model \cite{erdos1960evolution} where nodes form edges randomly.
        \item RANDOM-Cel is a variant of RANDOM by simulating the appearance of celebrities in online social networks. Specifically, we randomly select a time and a node as the celebrity. Then, we randomly sample a proportion of all nodes that connect to the celebrity node, i.e. simulating that someone suddenly gets the attention of many others.
        \item RANDOM-Com is another variant of RANDOM where community structures suddenly form, which often happens in real world triggered by off-line events. Specifically, we randomly select a time and assign a proportion of nodes to some communities, within which the nodes will have a high probability to form edges. The model is tuned by three parameters: the proportion of nodes, the number of communities and the edge forming probability.
    \end{itemize}
    We tune the parameters to control the percentage of edges generated by celebrities and communities to simulate different degrees of changes in the network structures.
    \begin{table}
    \small
    \caption{The Statistics of Datasets}\label{Datasets}
    \begin{tabular}{ l | c | c | c | c}
    \hline
    Dataset & Nodes & Static E & Evolving E & Type \\ \hline
    FACEBOOK & 52804 & 962654 & 632188 & SN\\ \hline
    MATH & 13586 & 600000 & 138408& W\\ \hline
    WIKI & 28223 & 1000000 & 375270& W,S\\ \hline
    DBLP & 28331 & 200000 & 79436 & W\\ \hline
    INTERNET & 32077 & 111644 & 196270 & W,SN\\ \hline
    RANDOM & 5000  & 60000 & 210000 &  - \\ \hline
    RANDOM-Cel & 5000 & 60000 & 210000 & -\\ \hline
    RANDOM-Com & 5000 & 60000 & 210000 & -\\ \hline
    \end{tabular}
    \\
    W = Weighted, S = Signed, SN = Static Network marked
    \end{table}

    \begin{table*}
    \small
    \caption{Dynamic network reconstruction. Relative error when fixing the number of restarts.}\label{LossValues}
    \centering
    \begin{tabular}{| l | c | c | c | c | c | c | c | c | c |}
    \hline
    \multirow{2}{*}{Dataset} & \multicolumn{4}{c|}{$avg(r)$} & \multicolumn{4}{c|}{$max(r)$}\\
      \cline{2-9}         &  TIMERS & LWI2 & Heu-FL & Heu-FT & TIMERS & LWI2 &  Heu-FL & Heu-FT \\ \hline
    FACEBOOK  & $\mathbf{0.005}$ & 0.020 &0.009 & 0.011 & $\mathbf{0.014}$ & 0.038 &0.025 & 0.023 \\ \hline
    MATH      & $\mathbf{0.037}$ & 0.057 &0.044 & 0.051 & $\mathbf{0.085}$ & 0.226 &0.117 & 0.179 \\ \hline
    WIKI      & $\mathbf{0.053}$ & 0.086 &0.071 & 0.281 & $\mathbf{0.139}$ & 0.332 &0.240 & 0.825 \\ \hline
    DBLP      & $\mathbf{0.042}$ & 0.110 &0.053 & 0.064 & $\mathbf{0.121}$ & 0.386 &0.198 & 0.238 \\ \hline
    INTERNET  & $\mathbf{0.152}$ & 0.218 &0.196 & 0.961 & $\mathbf{0.385}$ & 0.806 &0.647 & 1.897 \\ \hline
    \end{tabular}
    \\
    The best results are marked bold for each dataset in both measurements.
    \end{table*}
    \begin{table*}
    \small
    \begin{minipage}[b]{0.47\linewidth}
    \caption{Link Prediction Relative Error of MSE (\%)}\label{LinkPrediction}
    \centering
    \begin{tabular}{| l | c | c | c | c | c |}
    \hline
    Dataset    & TIMERS & LWI2 & Heu-FL & Heu-FT \\ \hline
    FACEBOOK   & $\mathbf{1.54}^*$ & 4.27 & 2.21 & 2.81  \\ \hline
    MATH       & $\mathbf{1.14}^*$ & 2.68 & 1.31 & 1.29  \\ \hline
    WIKI       & $\mathbf{1.06}^*$ & 3.44 & 1.63 & 4.13  \\ \hline
    DBLP       & $\mathbf{0.18}^*$ & 0.32 & 0.22 & 0.27  \\ \hline
    INTERNET   & $\mathbf{11.27}^*$ & 23.36 & 16.98 & 34.30 \\ \hline
    \end{tabular}
    \\
    *: outperform other methods at 0.005 level paired t-test in 10 runs.
    \end{minipage}
    \begin{minipage}[b]{0.52\linewidth}
      \caption{First Eigenvalue Tracking Error in RMSE}\label{EigenTrack}
    \centering
    \begin{tabular}{| l | c | c | c | c | c|}
    \hline
    Dataset & TIMERS & LWI2 & Heu-FL & Heu-FT \\ \hline
    FACEBOOK  & $\mathbf{0.66}$ & 0.94 &1.11 & 1.53   \\ \hline
    MATH      & $\mathbf{2.27}$ & 5.03 &4.92 & 4.80   \\ \hline
    WIKI      & $\mathbf{15.45}$ & 18.42 &17.73 & 97.08   \\ \hline
    DBLP      & $\mathbf{8.31}$ & 11.42 &13.92 & 24.66   \\ \hline
    INTERNET  & $\mathbf{4.18}$ & 12.56 &7.95 & 57.38   \\ \hline
    \end{tabular}
    \\
    The best result is marked bold for each dataset.
    \end{minipage}
    \end{table*}
    The statistics of all datasets are summarized in Table \ref{Datasets}. All experiments are conducted in a single PC with 2 i7-6700 CPU and 24GB memory in MATLAB language.

\subsection{Dynamic Network Reconstruction}
\subsubsection{Experimental Setting}
    The primal objective of SVD on dynamic networks is to reconstruct the given similarity matrix at each time slice. First, we validate different methods in terms of reconstruction. The procedure of reconstruction is as follows. At first, all methods calculate SVD on the static network to get the initial results. After that, the evolving edges of the network come in time slices and different methods decide whether to restart SVD individually. The loss at time slice $t$ can be calculated using Eq. \eqref{SVDobj}. Specifically, in our experiment, we set the similarity matrix to be the adjacency matrix for simplicity, and $k$ to be 100 as commonly used. To purely compare the effectiveness of the restart time, no incremental updating between time slices is adopted unless stated otherwise. For the evaluation metric, we use relative error defined as:
    \begin{small}
    \begin{equation}\label{eq:relative_error}
                r_t = \frac{ Loss \; at \; t -  Minimum \; Loss \; at\; t}{Minimum \; Loss \; at \;t} = \frac{\mathcal{J}(t) - \mathcal{L}(\mathbf{S}_t,k) }{ \mathcal{L}(\mathbf{S}_t,k)}.
    \end{equation}
    \end{small}$\mathcal{L}(\mathbf{S}_t,k)$ is defined in Eq. \eqref{SVDoptObj} and calculated by SVD as ground truth. We further take two measurements: the maximum error over all time slices $max(r)= \max_{1\leq t\leq T} r_t$ and the average error $avg(r)=\frac{1}{T} \sum_{t=1}^T r_t$.
\subsubsection{Fixing the Number of Restarts}
First, we report the results when fixing the number of restarts for all methods. Specifically, we directly set the number of restarts for two heuristic methods. For our method and LWI2, we adjust the threshold on error, so that all methods have the same number of restarts.
From Table \ref{LossValues}, we can observe that our method outperforms all the baselines on all dynamic networks. For example, in the largest network FACEBOOK, TIMERS can reduce the average error by 41.2\% and the maximum error by 39.8\%. These results demonstrate that the timing of SVD restart is indeed crucial for dynamic networks, and our method has better performance capturing the appropriate restart timing than heuristic methods. LWI2 fails because the overall reconstruction loss is not a good measurement.

\begin{figure}
\includegraphics[width=8.2cm]{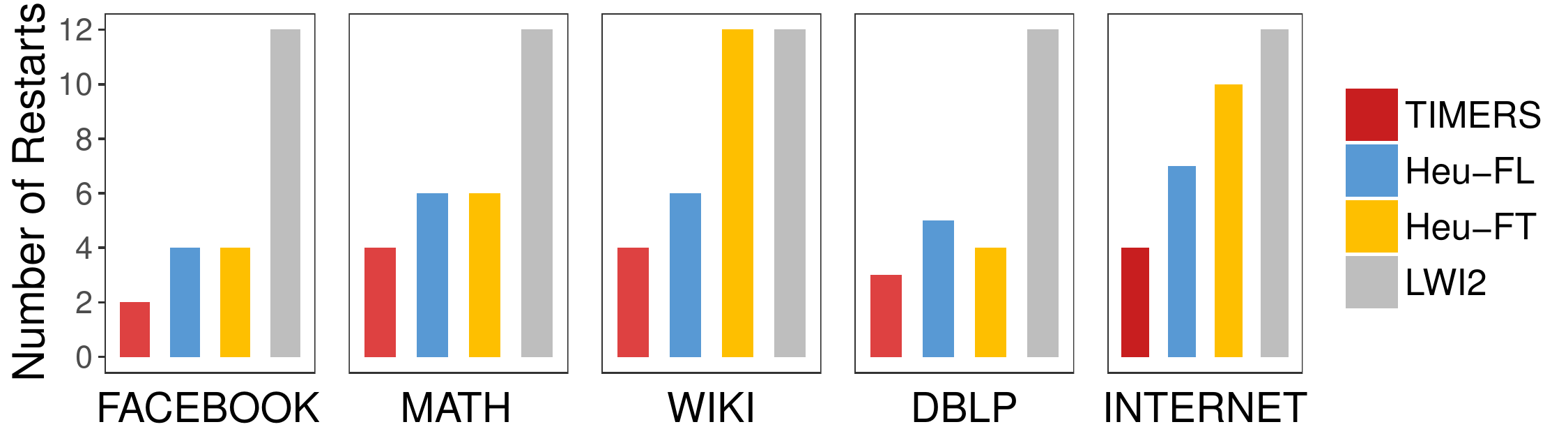}
\caption{Dynamic network reconstruction. The number of restarts needed when fixing the same maximum error.}
\label{Reconstruction}
\end{figure}

\subsubsection{Fixing the Maximum Error}
We further report the number of restarts of each method when the maximum error is fixed, i.e. we control the threshold in TIMERS and LWI2, and adjust the number of restarts for heuristic methods, so that all methods achieve the same maximum error. As shown in Figure \ref{Reconstruction}, TIMERS greatly reduces the number of restarts while maintaining the same maximum error. In FACEBOOK, the reducing rate achieves 50\%. This demonstrates that our method could save lots of computation resources while maintaining similar SVD accuracy as other methods.

\subsection{Dynamic Network Applications}
The effectiveness of TIMERS in reconstruction lays the foundation of its gains in applications of dynamic networks. Next, we conduct experiments on two typical applications: link prediction and network parameter characterization. The former focuses on predicting individual edges while the latter focuses on macroscopical indexes of the network.

\subsubsection{Link Prediction}
Link Prediction is an important dynamic network application using SVD. Specifically, we randomly hide 10\% of the network and test whether SVD on the rest of the network can recover them. For the evaluation metric, because both weighted and signed networks exist and the dynamic changes include adding new edges and changing edge weights, some standard metrics in link prediction, such as precision or AUC, do not fit in our experiment. Here, we use Mean Square Error (MSE) \cite{levinson1946wiener} as a replacement. In addition, what we aim to evaluate is how well different methods can approximate the performance of the optimal SVD. So we calculate the relative error in a similar way as Eq. \eqref{eq:relative_error} using MSE. We report the average results of 10 runs in Table \ref{LinkPrediction}. We can see that TIMERS consistently outperforms baseline methods.

\subsubsection{Network Parameter Characterization}
Some important network parameters can be characterized by the top-$k$ eigenvalues of the network. When networks evolve over time, these parameters need to be tracked. Here, we choose one state-of-the-art incremental SVD method Trip \cite{chen2015fast} to incrementally update SVD results between two time slices. Following their work, the largest eigenvalue of the adjacency matrix is selected as the target value and rooted mean square error (RMSE) \cite{levinson1946wiener} is adopted as the measurement. For fair comparison, we fix the number of restarts for all methods. From Table \ref{EigenTrack}, we can see that TIMERS achieves the best results on all networks, demonstrating the effectiveness of our proposed method. On MATH and INTERNET, we can reduce RMSE by 50\%.
\subsection{Analysis}\label{AnalysisPart}
\subsubsection{Robustness Analysis}
Next, we conduct experiments in synthetic datasets to analyze whether TIMERS is robust and under what circumstances can it gain more. We vary the percentage of edges generated by celebrities and communities to simulate different network evolving scenarios.
\begin{figure}
\centering
\includegraphics[width=7.75cm]{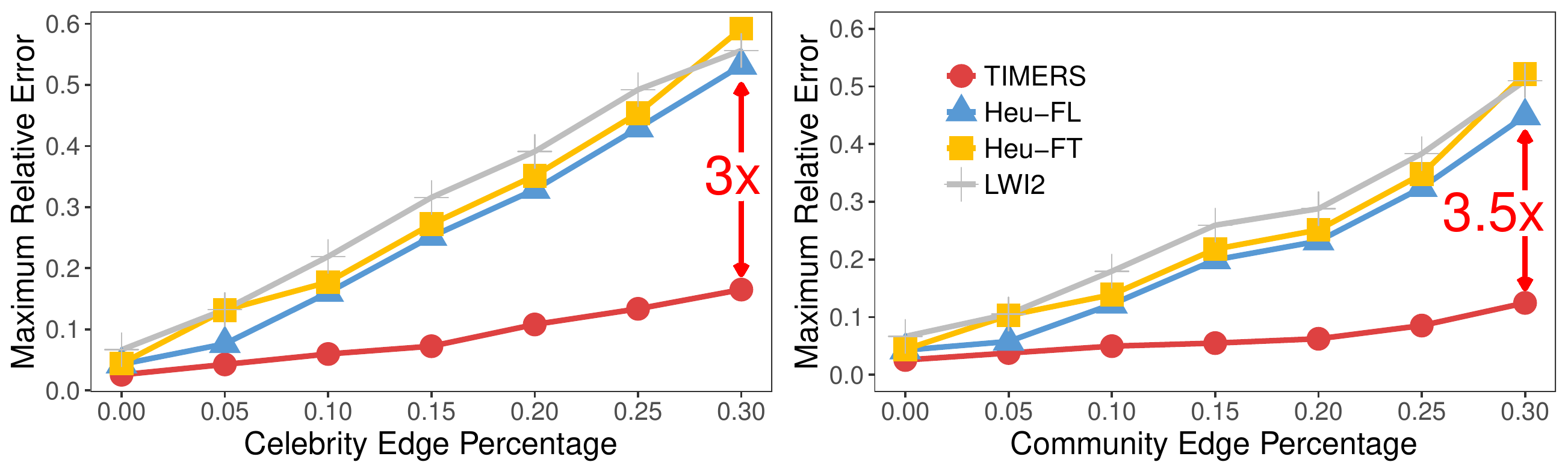}
\caption{Robustness analysis. Reconstruction error on synthetic networks with varying network structures.}
\label{LossValues2}
\end{figure}
From Figure \ref{LossValues2}, we can observe that the improvement of TIMERS has a positive correlation with the percentage of celebrity and community edges. When 30\% of the new edges are caused by celebrities or communities, TIMERS reports more than 3 times improvement in terms of the maximum error. Similar results are observed in link prediction and network parameter characterization tasks but are omitted for brevity.

The generation of celebrities or communities often lays dramatic influence on the network structure and thus SVD restart is more demanded to capture such changes. The experimental results show TIMERS is able to detect this change and then instructs the algorithm to restart in time. It also suggests TIMERS is robust and able to prevent error accumulation in unusual situations of network evolution.

\begin{figure}
\centering
\includegraphics[width=7.5cm]{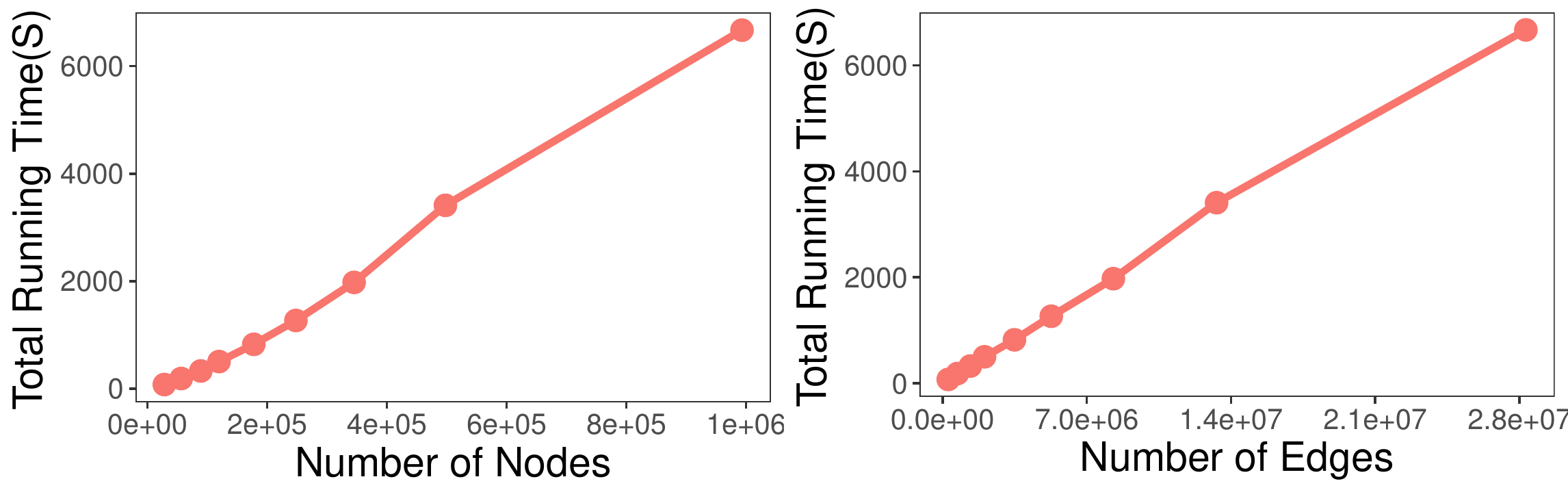}
\caption{Scalability analysis. The total running time of TIMERS w.r.t. the number of nodes and edges on DBLP.}
\label{Scala}
\end{figure}

\subsubsection{Scalability Analysis}\label{Scalability}
Now, we analyze the scalability of TIMERS. We use the same experimental setting and record the running time when the size of the network grows. Results of DBLP, whose original size is the largest among our datasets, are reported in Figure \ref{Scala}. The running time grows linearly with the network size (number of nodes and number of edges respectively), showing that our method is scalable.
\subsubsection{Parameter Analysis}
In our experiments, we fix the dimensionality of the low-rank space as $k$=100, which is widely used in the existing works. Similar results are observed for different $k$ but are omitted for the lack of space. The other important parameter is the error threshold $\Theta$. Qualitatively, larger $\Theta$ will tolerate more error and leads to fewer number of restarts. The setting of $\Theta$ depends on application scenarios. How to rigorously set $\Theta$ to precisely control the number of restart is left as future work.

\section{Conclusion}
In this paper, we tackle the problem of SVD restart time for dynamic networks and propose TIMERS, a novel approach based on monitoring and bounding the maximum error. By exploring a lower bound of the SVD minimum loss on dynamic networks, we can trigger SVD restart automatically when the margin between the reconstruction loss and the lower bound exceeds a preset threshold. We show that our method is scalable and general across different types of networks. Extensive experimental results on synthetic and real dynamic networks show that TIMERS outperforms the existing methods in all tasks. One future direction is to generalize this idea to directed networks and non-square matrices.

\section{Acknowledgments}
This work was supported in part by National Program on Key Basic Research Project (No. 2015CB352300), National Natural Science Foundation of China (No. 61772304, No. 61521002, No. 61531006, No. 61702296), National Natural Science Foundation of China Major Project (No. U1611461), the NSERC Discovery Grant program, the Canada Research Chair program, the NSERC Strategic Grant program, the research fund of Tsinghua-Tencent Joint Laboratory for Internet Innovation Technology, and the Young Elite Scientist Sponsorship Program by CAST. Peng Cui, Xiao Wang and Wenwu Zhu are the corresponding authors. All opinions, findings, conclusions and recommendations in this paper are those of the authors and do not necessarily reflect the views of the funding agencies.

\fontsize{9pt}{10pt}
\selectfont
\bibliography{ms}
\bibliographystyle{aaai}

\end{document}